\documentclass[a4paper,twocolumn,11pt,accepted=2024-08-16]{quantumarticle}
\pdfoutput=1

\usepackage[T1]{fontenc}
\usepackage{amsmath,amssymb} 
\usepackage{amsthm}
\usepackage{amsfonts}
\usepackage{graphicx}
\usepackage{hyperref}
\usepackage{xcolor}
\usepackage[ruled, vlined, linesnumbered]{algorithm2e}
\usepackage{tikz}

\newenvironment{algo}[1]{
  \algorithm[ht]
    \caption{#1}
    \DontPrintSemicolon
    \SetAlgoCaptionLayout{left}
    \SetAlgoHangIndent{0pt}
    \SetKwInOut{Input}{input}
    \SetKwInOut{Output}{output} 
}{
  \endalgorithm
}

\theoremstyle{definition}

\newtheorem{lemma}{Lemma}
\newtheorem{proposition}{Proposition}

\newcommand{\Z}{\mathbb{Z}}

\newcommand{\Plog}{P_{\log}}
\newcommand{\HH}{{\bf H}}

\DeclareMathOperator{\supp}{supp}
\DeclareMathOperator{\HGP}{HGP}

\begin{document}

\title{Fast erasure decoder for hypergraph product codes}

\author{Nicholas Connolly}
\affiliation{Inria, Paris, France}
\affiliation{Okinawa Institute of Science and Technology Graduate University, Okinawa, Japan}

\author{Vivien Londe}
\affiliation{Microsoft, Paris, France}

\author{Anthony Leverrier}
\affiliation{Inria, Paris, France}

\author{Nicolas Delfosse}
\affiliation{Microsoft Quantum, Redmond, Washington 98052, USA}

\maketitle

\begin{abstract}
We propose a decoder for the correction of erasures with hypergraph product codes, which form one of the most popular families of quantum LDPC codes. Our numerical simulations show that this decoder provides a close approximation of the maximum likelihood decoder that can be implemented in $O(N^2)$ bit operations where $N$ is the length of the quantum code. A probabilistic version of this decoder can be implemented in $O(N^{1.5})$ bit operations.
\end{abstract}

\maketitle

\section{Introduction}

Due to the high noise rate of quantum hardware, extensive quantum error correction is necessary to scale quantum devices to the regime of practical applications.
The surface code~\cite{dennis2002topological, fowler2012surface} is one of the most popular quantum error correcting codes for quantum computing architectures but it comes with an enormous qubit overhead because each qubit must be encoded into hundreds or thousands of physical qubits.

Quantum Low-Density Parity-Check (LDPC) codes~\cite{gallager1962low, mackay2004sparse} such as hypergraph product (HGP) codes~\cite{tillich2013quantum} promise a significant reduction of this qubit overhead~\cite{got14, fawzi2018constant}.
Numerical simulations with circuit noise show a $15\times$ reduction of the qubit count in the large-scale regime~\cite{tremblay2022constant}.
For applications to quantum fault tolerance, HGP codes must come with a {\em fast} decoder, whose role is to identify which error occurred.
In this work, we propose a fast decoder for the correction of erasures or detectable qubit loss. By replacing erased qubits with uniformly random mixed states, erasure correction can be converted into error correction. This is preferable to the standard error channel because non-erased qubits are assumed to have no errors. Our numerical simulations show that our decoder achieves a logical error rate close to the maximum likelihood (ML) decoder.

Our motivation for focusing on the decoding of erasures is twofold. First it is practically relevant and it is the dominant source of noise in some quantum platforms such as photonic systems~\cite{knill2001scheme, bartolucci2023fusion} for which a photon loss can be interpreted as an erasure.
Erasure errors are also present in platforms based on neutral atoms~\cite{wu2022erasure}, trapped ions~\cite{kang2023quantum}, superconducting qubits~\cite{kubica2023erasure}, or circuit QED~\cite{tsunoda2023error}.
Second, many of the ideas that led to the design of capacity-achieving classical LDPC codes over binary symmetric channels were first discovered by studying the correction of erasures~\cite{kudekar2013spatially, richardson2008modern}.

\section{Classical erasure decoders}

A linear code with length $n$ is defined to be the kernel $C = \ker H$ of an $r \times n$ binary matrix $H$ called the {\em parity-check matrix}.
Our goal is to protect a {\em codeword} $x \in C$ against erasures.
We assume that each bit is erased independently with probability $p$ and erased bits are flipped independently with probability $1/2$.
The set of erased positions is known and is given by an {\em erasure vector} $\varepsilon \in \Z_2^n$ such that bit $b_i$ is erased iff $\varepsilon_i = 1$.
The initial codeword $x$ is mapped onto a vector $y = x+e \in \Z_2^n$ where $e$ is the indicator vector of the flipped bits of $x$. In particular the support of $e$ satisfies $\supp(e) \subseteq \supp(\varepsilon)$.
To detect $e$, we compute the {\em syndrome} $s = H y = H e \in \Z_2^r$. A non-trivial syndrome indicates the presence of bit-flips.

The goal of the decoder is to provide an estimation $\hat e$ of $e$ given $s$ and $\varepsilon$ and it succeeds if $\hat e = e$.
This can be done by solving the linear system $H \hat e = s$ with the condition $\supp(\hat e) \subseteq \supp(\varepsilon)$ thanks to Gaussian elimination.
This {\em Gaussian decoder} runs in $O(n^3)$ bit operations which may be too slow in practice for large $n$.

\begin{algo}{Classical peeling decoder}
\Input{An erasure vector $\varepsilon \in \Z_2^N$ and a syndrome $s \in \Z_2^r$.}
\Output{Either {\bf failure} or $\hat e \in \Z_2^n$ such that $H \hat e = s$ and $\supp(\hat{e}) \subseteq \supp(\varepsilon)$.}
\label{algo:classical_peeling_decoder}

\BlankLine
	Set $\hat e = 0$.\;
	\While{there exists a dangling check}
	{
		Select a dangling check $c_i$.\;
		Let $b_j$ be the dangling bit incident to $c_i$.\;
		\If{$s_i = 1$}
		{
			Flip the $j$-th bit of $\hat e$.\;
			Flip $s_k$ for all checks $c_k$ incident with $b_j$.\;
		}
		Set $\varepsilon_j = 0$.
	}
	{\bf if} $\varepsilon \neq 0$ {\bf return Failure}, {\bf else return $\hat e$.}
\end{algo}

The classical peeling decoder~\cite{luby2001efficient}, described in Algorithm~\ref{algo:classical_peeling_decoder}, provides a fast alternative to the Gaussian decoder. Unlike a maximum likelihood decoder, for which logical errors are the only source of failures, the additional possibility of a decoder failure is a major drawback for the peeling decoder.  However, because peeling decoder failures are infrequent for sparse codes, this algorithm gives a much more efficient linear-complexity decoding algorithm for LDPC codes without a large increase in the failure rate.

To describe this decoder, it is convenient to introduce the {\em Tanner graph}, denoted $T(H)$, of the linear code $C = \ker H$. It is the bipartite graph with one vertex $c_1, \dots, c_r$ for each row of $H$ and one vertex $b_1, \dots, b_n$ for each column of $H$ such that $c_i$ and $b_j$ are connected iff $H_{i, j} = 1$. We refer to $c_i$ as a {\em check node} and $b_j$ as a {\em bit node}.
The codewords of $C$ are the bit strings such that the sum of the neighboring bits of a check node is 0 mod 2.
Given an erasure vector $\varepsilon$, a check node is said to be a {\em dangling check} if it is incident to a single erased bit. We refer to this erased bit as a {\em dangling bit}. Figure~\ref{fig:peeling_decoder} shows a simple example of this setup.
The basic idea of the peeling decoder is to use dangling checks to recover the values of dangling bits and to repeat until the erasure is fully corrected.

\begin{figure}[t]
\centering
{
\begin{tabular}{ccl}$H$&=&$\begin{bmatrix}1&1&1&0\\1&1&1&1\\0&1&1&1\end{bmatrix}$\\ \\$\tikz{\node [rectangle,draw,thick] at (0,0) {};}$&=&check node\\$\tikz{\node [circle,draw,thick] at (0,0) {};}$&=&bit node\\ \\$\tikz{\node [rectangle,draw,thick,fill=red] at (0,0) {};}$&=&dangling check\\$\tikz{\node [circle,draw,thick,fill=red] at (0,0) {};}$&=&dangling bit\end{tabular}
}
{
\begin{tikzpicture}[baseline=(current bounding box.center)]
    \node (1) [circle,draw,inner sep=3pt,line width=1] at (0,3) {};
    \node (2) [circle,draw,inner sep=3pt,line width=1,fill=red] at (0,2) {};
    \node (3) [circle,draw,inner sep=3pt,line width=1,opacity=0.5,color=gray] at (0,1) {};
    \node (4) [circle,draw,inner sep=3pt,line width=1,opacity=0.5,color=gray] at (0,0) {};
    
    \node (a) [rectangle,draw,inner sep=4.5pt,line width=1] at (1.5,2.5) {};
    \node (b) [rectangle,draw,inner sep=4.5pt,line width=1] at (1.5,1.5) {};
    \node (c) [rectangle,draw,inner sep=4.5pt,line width=1,fill=red] at (1.5,0.5) {};

    \path (1) edge[line width=1.5] (a);
    \path (1) edge[line width=1.5] (b);
    
    \path (2) edge[line width=1.5] (a);
    \path (2) edge[line width=1.5] (b);
    \path (2) edge[line width=1.5] (c);
    
    \path (3) edge[line width=1.5,opacity=0.5,color=gray] (a);
    \path (3) edge[line width=1.5,opacity=0.5,color=gray] (b);
    \path (3) edge[line width=1.5,opacity=0.5,color=gray] (c);
    
    \path (4) edge[line width=1.5,opacity=0.5,color=gray] (b);
    \path (4) edge[line width=1.5,opacity=0.5,color=gray] (c);
\end{tikzpicture}
}
{
\begin{tikzpicture}[baseline=(current bounding box.center)]
    \node (1) [circle,draw,inner sep=3pt,line width=1,opacity=0.5,color=gray] at (0,3) {};
    \node (2) [circle,draw,inner sep=3pt,line width=1] at (0,2) {};
    \node (3) [circle,draw,inner sep=3pt,line width=1] at (0,1) {};
    \node (4) [circle,draw,inner sep=3pt,line width=1,opacity=0.5,color=gray] at (0,0) {};
    
    \node (a) [rectangle,draw,inner sep=4.5pt,line width=1] at (1.5,2.5) {};
    \node (b) [rectangle,draw,inner sep=4.5pt,line width=1] at (1.5,1.5) {};
    \node (c) [rectangle,draw,inner sep=4.5pt,line width=1] at (1.5,0.5) {};

    \path (1) edge[line width=1.5,opacity=0.5,color=gray] (a);
    \path (1) edge[line width=1.5,opacity=0.5,color=gray] (b);
    
    \path (2) edge[line width=1.5] (a);
    \path (2) edge[line width=1.5] (b);
    \path (2) edge[line width=1.5] (c);
    
    \path (3) edge[line width=1.5] (a);
    \path (3) edge[line width=1.5] (b);
    \path (3) edge[line width=1.5] (c);
    
    \path (4) edge[line width=1.5,opacity=0.5,color=gray] (b);
    \path (4) edge[line width=1.5,opacity=0.5,color=gray] (c);

    % Box around stopping set.
    \node (nw) at (-0.5,2.5) {};
    \node (ne) at (0.5,2.5) {};
    \node (se) at (0.5,0.5) {};
    \node (sw) at (-0.5,0.5) {};
    \node (ss) at (-0.75,1.5) {\rotatebox{90}{\text{stopping set}}};
    
    \path (nw) edge[line width=1,style=dashed,color=blue,opacity=0.5] (ne);
    \path (ne) edge[line width=1,style=dashed,color=blue,opacity=0.5] (se);
    \path (se) edge[line width=1,style=dashed,color=blue,opacity=0.5] (sw);
    \path (sw) edge[line width=1,style=dashed,color=blue,opacity=0.5] (nw);
    
\end{tikzpicture}
}
\caption{Two examples of an erasure-induced subgraph for a simple Tanner graph $T(H)$. Non-erased nodes are grayed-out and excluded from the subgraph.}
\label{fig:peeling_decoder}
\end{figure}

The notion of stopping set was introduced in \cite{zyablov1974decoding} to bound the failure probability of the decoder for classical LDPC codes.
A {\em stopping set} for the Tanner graph $T(H)$ is defined to be a subset of bits that contains no dangling bit, as shown in the rightmost example of Fig.~\ref{fig:peeling_decoder}.
If the erasure covers a non-empty stopping set, then Algorithm~\ref{algo:classical_peeling_decoder} returns {\bf Failure}.
Algorithm~\ref{algo:classical_peeling_decoder} gets stuck when the remaining erasure is equal to a non-empty stopping set because each check is incident to either zero or at least two erased bits.

The peeling decoder has been adapted to surface codes~\cite{delfosse2020linear} and color codes~\cite{lee2020trimming}.
In the rest of this paper, we design a fast erasure decoder inspired by the peeling decoder that applies to a broad class of quantum LDPC codes. Our design process relies on the analysis of stopping sets.
At each design iteration, we propose a new version of the decoder, identify its most common stopping sets and modify the decoder to make it capable of correcting these dominant stopping sets.

\section{Classical peeling decoder for quantum CSS codes}

A CSS code with length $N$ is defined by commuting $N$-qubit Pauli operators $S_{X, 1}, \dots, S_{X, R_X} \in \{I,X\}^{\otimes N}$ and $S_{Z, 1}, \dots, S_{Z, R_Z} \in \{I,Z\}^{\otimes N}$ called the {\em stabilizer generators}~\cite{steane1996multiple, calderbank1996good}.
We refer to the group they generate as the {\em stabilizer group} and its elements are called {\em stabilizers}.

We can correct $X$ and $Z$ errors independently with the same strategy. Therefore we focus on the correction of $X$ errors, based on the measurement of the $Z$-type stabilizer generators. This produces a {\em syndrome} $\sigma(E) \in \Z_2^{R_Z}$, whose $i^{\text{th}}$ component is $1$ iff the error $E$ anti-commutes with $S_{Z, i}$.
An error with trivial syndrome is called a {\em logical error} and a {\em non-trivial logical error} if it is not a stabilizer, up to a phase.

We assume that qubits are erased independently with probability $p$ and that an erased qubit suffers from a uniform error $I$ or $X$~\cite{grassl1997codes}. This results in an $X$-type error $E$ such that $\supp(E) \subseteq \supp(\varepsilon)$.
The decoder returns an estimate $\hat E$ of $E$ given the erasure vector $\varepsilon$ and the syndrome $s$ of $E$.
It succeeds iff $\hat E E$ is a stabilizer (up to a phase).
The {\em logical error rate} of the scheme, denoted $\Plog(p)$, is the probability that $\hat E E$ is a non-trivial logical error.

By mapping Pauli operators onto binary strings, one can cast the CSS erasure decoding problem as the decoding problem of a classical code with parity check matrix $\HH_Z$ whose rows correspond to the $Z$-type stabilizer generators.
As a result, one can directly apply the classical Gaussian decoder and the classical peeling decoder to CSS codes.
From Lemma~1 of~\cite{delfosse2020linear}, the Gaussian decoder is an optimal decoder, {\em i.e.} a Maximum Likelihood (ML) decoder, but its complexity scaling like $O(N^3)$ makes it too slow for large codes.
The peeling decoder is faster, with scaling as $O(N)$. However, the following lemma proves that, unlike its classical counterpart, stopping sets and hence decoder failures occur much more frequently for quantum codes even in the LDPC case.

\begin{lemma} [Stabilizer stopping sets] \label{lemma:stabilizer_stopping_set}
The support of an $X$-type stabilizer is a stopping set for the Tanner graph $T(\HH_Z)$.
\end{lemma}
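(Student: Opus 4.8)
The plan is to translate the Pauli-operator statement into the binary language of the Tanner graph and then exploit commutation relations. First I would represent the $X$-type stabilizer $S$ by its support vector $v \in \Z_2^N$, so that $\supp(S)$ coincides, as a set of bit nodes of $T(\HH_Z)$, with $\supp(v)$. The key observation is that, since $S$ belongs to the (abelian) stabilizer group, it commutes with every $Z$-type stabilizer generator $S_{Z,i}$. An $X$-type and a $Z$-type Pauli operator commute exactly when their supports overlap in an even number of positions, so commutation of $S$ with $S_{Z,i}$ is equivalent to $v \cdot w_i \equiv 0 \pmod 2$, where $w_i \in \Z_2^N$ is the support vector of $S_{Z,i}$, i.e.\ the $i$-th row of $\HH_Z$. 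Collecting these identities over all $i$ yields $\HH_Z v = 0$, that is, $v \in \ker \HH_Z$.

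Next I would relate membership in $\ker \HH_Z$ to the absence of dangling bits. Fix a check node $c_i$ of $T(\HH_Z)$; by construction its neighbours are precisely the bit nodes $b_j$ with $(\HH_Z)_{i,j} = 1$. Taking the erasure vector to be $v$ itself, the number of erased bits incident to $c_i$ is $|\{\, j : (\HH_Z)_{i,j} = 1 \text{ and } v_j = 1 \,\}|$, whose parity equals the $i$-th coordinate $(\HH_Z v)_i$. Since $\HH_Z v = 0$, this count is even for every $i$, hence never equal to $1$, so no check node incident to $\supp(v)$ is a dangling check. Therefore $\supp(v)$ contains no dangling bit, which is exactly the definition of a stopping set for $T(\HH_Z)$, and the proof is complete.

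There is no genuine analytic difficulty here; the only thing to get right is the dictionary between the two descriptions of the code — namely that commutation of $S$ with the $Z$-type generators is the same as $\supp(S) \in \ker \HH_Z$, and that $(\HH_Z v)_i = 0$ says precisely that check $c_i$ sees an even number of the erased qubits. Once these bookkeeping facts are stated cleanly the conclusion is immediate. It is worth noting that the argument applies verbatim to the support of any vector of $\ker \HH_Z$, not just to stabilizer supports, which foreshadows exactly the difficulty that the rest of the paper must engineer around.
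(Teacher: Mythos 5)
Your proof is correct and follows essentially the same route as the paper: commutation of the $X$-type stabilizer with every $Z$-type generator means its binary support vector lies in $\ker \HH_Z$, and the support of any such codeword meets each check in an even number of positions, hence contains no dangling bit. You simply spell out the parity bookkeeping that the paper leaves implicit.
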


\begin{proof}
This is because an $X$-type stabilizer commutes with $Z$-type generators, and therefore its binary representation is a codeword for the classical linear code $\ker \HH_Z$.
\end{proof}

As a consequence, the classical peeling decoder has no threshold for any family of quantum LDPC codes defined by bounded weight stabilizers. 
Indeed, if each member of the family has at least one $X$-type stabilizer with weight $w$, then the logical error rate satisfies $\Plog(p) \geq p^w$, which is a constant bounded away from zero when $N \to \infty$. % \nrightarrow 0$.
This is in sharp contrast with the classical case for which the probability to encounter a stopping set provably vanishes for carefully designed families of LDPC codes~\cite{richardson2001efficient}.

\section{Pruned peeling decoder}

Since the peeling decoder gets stuck into stopping sets induced by the $X$-type generators, the idea is to look for such a generator $S$ supported entirely within the erasure and to remove an arbitrary qubit of the support of $S$ from the erasure.
We can remove this qubit from the erasure because either the error $E$ or its equivalent error $ES$ (also supported inside $\varepsilon$) acts trivially on this qubit. 

\begin{algo}{Pruned peeling decoder}
\Input{An erasure vector $\varepsilon \in \Z_2^N$, a syndrome $s \in \Z_2^{R_Z}$, and an integer $M$.}
\Output{Either {\bf Failure} or an $X$-type error $\hat E \in \{I, X\}^N$ such that $\sigma(\hat E) = s$ and $\supp(\hat E) \subseteq \supp(\varepsilon)$.}
\label{algo:pruned_peeling_decoder}

\BlankLine
	Set $\hat E = I$.\;
	\While{there exists a dangling generator}
	{
		Select a dangling generator $S_{Z, i}$.\;
		Let $j$ be the index of the dangling qubit incident to $S_{Z, i}$.\;
		\If{$s_i = 1$}
		{
			Replace $\hat E$ by $\hat E X_j$ and $s$ by $s + \sigma(X_j)$.\;
		}
		Set $\varepsilon_j = 0$.\;
	}
        \If{there exists a product $S$ of up to $M$ stabilizer generators among $S_{X,1}, \dots, S_{X, R_X}$ such that $\supp(S) \subseteq \supp(\varepsilon)$}
        {
                Select a qubit in $\supp(S)$ with index $j$ and set $\varepsilon_j = 0$.\;
                    Go back to step 2.
        }
        {\bf if} $\varepsilon \neq 0$ {\bf return Failure}, {\bf else return $\hat E$.}
\end{algo}

This leads to the pruned peeling decoder described in Algorithm~\ref{algo:pruned_peeling_decoder}.
To make it easier to follow, we use the terms {\em dangling generator} and {\em dangling qubit} in place of dangling check and dangling bit.
A dangling generator is a $Z$ generator in the context of correcting $X$ errors.
In order to keep the complexity of the peeling decoder linear, we look for an $X$-type stabilizer which is a product of up to up $M$ stabilizer generators where $M$ is a small constant.
For low erasure rate, we expect the erased stabilizers to have small weight and therefore a small value of $M$ should be sufficient.

\begin{figure}[t]
\centering
\includegraphics[scale=.40]{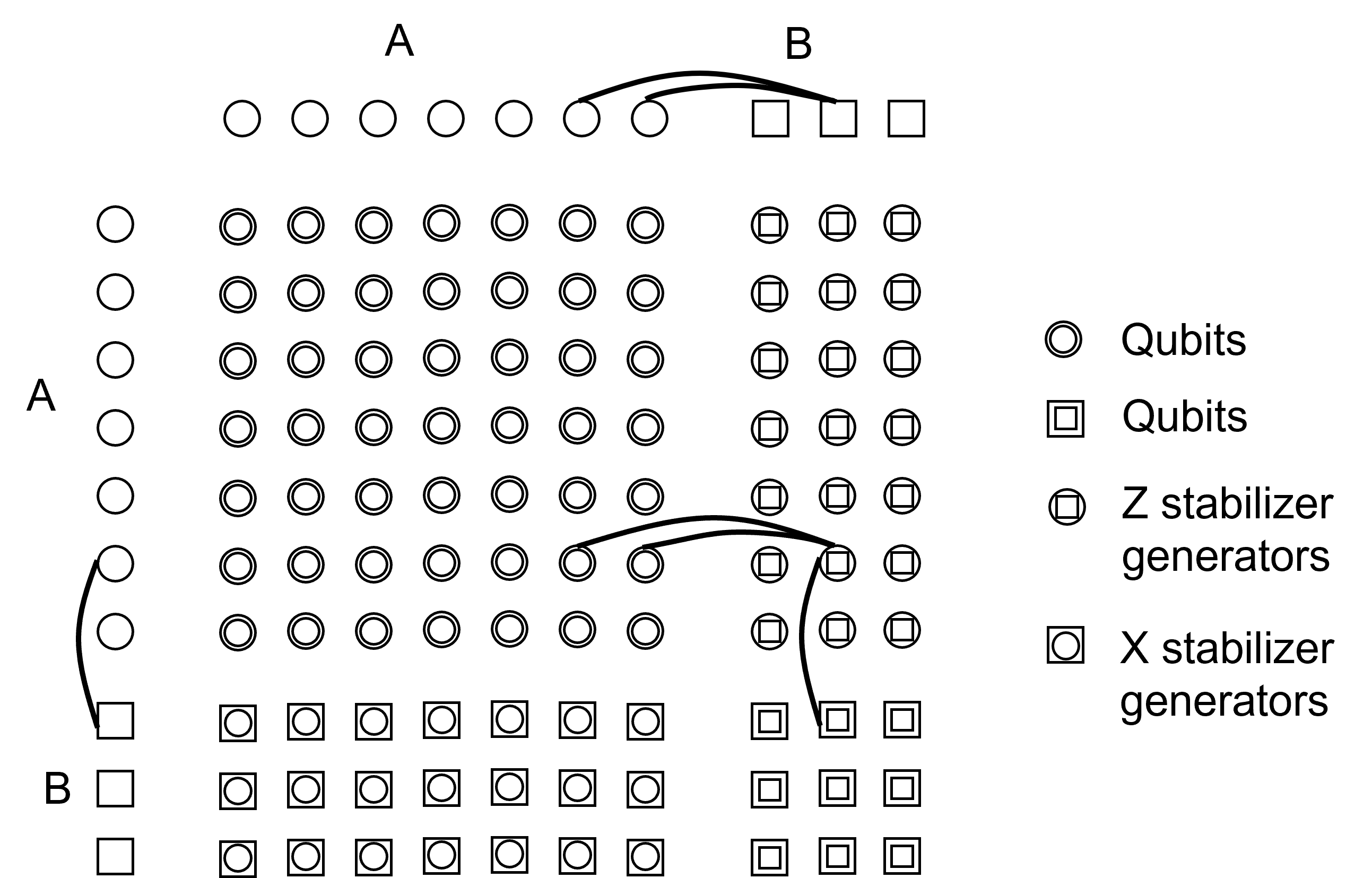}
\caption{The HGP code derived from a linear code with 7 bits and 3 checks. The support of the $Z$ stabilizer generator with index $(b, a') \in B \times A$ is given by the neighbors of $(b, a')$ in the Cartesian product of the graph $T(H)$ with itself. In the product notation, we follow the $x\times y$ convention, where the first coordinate denotes the horizontal code and the second coordinate denotes the vertical code.}
\label{fig:hgp}
\end{figure}

Although the pruned peeling algorithm can be used with any CSS code, we are particularly interested in its application to HGP codes. Here, let us recall the hypergraph product construction from~\cite{tillich2013quantum}.
For a classical code with parity check matrix $H$, recall that the Tanner graph $T(H)=(A\cup B,E_H)$ of this code is the bipartite graph whose nodes consist of disjoint sets $A$ corresponding to bits (the columns of $H$) and $B$ corresponding to checks (the rows of $H$). Edges between these nodes are defined by the corresponding nonzero cells in $H$, and are denoted by the set $E_H$.
The hypergraph product obtained from this classical code, denoted $\HGP(H)$, is defined from the cartesian product of $T(H)$ with itself (see Fig.~\ref{fig:hgp}). It is a CSS code with 4-partite Tanner graph $\HGP(H)=((A\times A)\cup (B\times B)\cup (A\times B)\cup (B\times A),E)$. Qubits in this graph correspond to the nodes in $(A\times A)\cup (B\times B)$ and come in two types resulting from the graph's product structure: "bit-bit" qubits labeled by the pairs $(a,a')\in A\times A$ and "check-check" qubits labeled by the pairs $(b,b')\in B\times B$.

The remaining nodes in $(A\times B)\cup (B\times A)$ are identified with $X$- and $Z$-type stabilizer generators, respectively. For each $(a,b')\in A\times B$, we define a stabilizer generator acting as $X$ on the qubits of the form $(b,b')\in B\times B$ such that $\{a,b\}\in E_H$ and acting as $X$ on the qubits of the form $(a,a')\in A\times A$ such that $\{a',b'\}\in E_H$. Stated more explicitly, given fixed $(a,b')\in A\times B$, any $b\in B$ adjacent to $a$ in $T(H)$ lifts to a check-check qubit $(b,b')$ adjacent to $(a,b')$ in $\HGP(H)$, and any $a'\in A$ adjacent to $b'$ in $T(H)$ lifts to a bit-bit qubit $(a,a')$ adjacent to $(a,b')$ in $\HGP(H)$; this node $(a,b')$ acts as an $X$ stabilizer generator on these two types of adjacent qubits in $\HGP(H)$, as shown in Fig.~\ref{fig:hgp}. Similarly, for each $(b,a')\in B\times A$, we define an analogous stabilizer generator acting as $Z$ on the qubits of the form $(a,a')$ such that $\{a,b\}\in E_H$ and acting as $Z$ on the qubits of the form $(b,b')$ such that $\{a',b'\}\in E_H$. If the input classical Tannar graph $T(H)$ is sparce, then $\HGP(H)$ is LDPC.

\begin{figure}[t]
\centering
\begin{tikzpicture}[baseline=(current bounding box.center)]
% Corner nodes
\node (nw) [inner sep=0] at (-4/2,4/2) {};
\node (n) [inner sep=0] at (0,4/2) {};
\node (ne) [inner sep=0] at (4/2,4/2) {};
\node (w) [inner sep=0] at (-4/2,0) {};
\node (c) [inner sep=0] at (0,0) {};
\node (e) [inner sep=0] at (4/2,0) {};
\node (sw) [inner sep=0] at (-4/2,-4/2) {};
\node (s) [inner sep=0] at (0,-4/2) {};
\node (se) [inner sep=0] at (4/2,-4/2) {};

% Box
\path (nw.center) edge[line width=1,opacity=1,dashed] (ne.center);
%\path (w.center) edge[line width=1,dashed,opacity=1] (e.center);
\path (sw.center) edge[line width=1,opacity=1,dashed] (se.center);
\path (nw.center) edge[line width=1,opacity=1,dashed] (sw.center);
%\path (n.center) edge[line width=1,dashed,opacity=1] (s.center);
\path (ne.center) edge[line width=1,opacity=1,dashed] (se.center);

% NW BLOCK
% ROW 3
\node (-33) [circle,draw,inner sep=3pt,line width=1,opacity=0.5,color=gray] at (-3/2,3/2) {};
\node (-33s) [circle,draw,inner sep=2pt,line width=1,opacity=0.5,color=gray] at (-3/2,3/2) {};
\node (-23) [circle,draw,inner sep=3pt,line width=1,opacity=0.5,color=gray] at (-2/2,3/2) {};
\node (-23s) [circle,draw,inner sep=2pt,line width=1,opacity=0.5,color=gray] at (-2/2,3/2) {};
\node (-13) [circle,draw,inner sep=3pt,line width=1,opacity=0.5,color=gray] at (-1/2,3/2) {};
\node (-13s) [circle,draw,inner sep=2pt,line width=1,opacity=0.5,color=gray] at (-1/2,3/2) {};
% ROW 2
\node (-32) [circle,draw,inner sep=3pt,line width=1,opacity=0.5,color=gray] at (-3/2,2/2) {};
\node (-32s) [circle,draw,inner sep=2pt,line width=1,opacity=0.5,color=gray] at (-3/2,2/2) {};
\node (-22) [circle,draw,inner sep=3pt,line width=1,opacity=0.5,color=gray] at (-2/2,2/2) {};
\node (-22s) [circle,draw,inner sep=2pt,line width=1,opacity=0.5,color=gray] at (-2/2,2/2) {};
\node (-12) [circle,draw,inner sep=3pt,line width=1,opacity=1] at (-1/2,2/2) {};
\node (-12s) [circle,draw,inner sep=2pt,line width=1,opacity=1] at (-1/2,2/2) {};
% ROW 1
\node (-31) [circle,draw,inner sep=3pt,line width=1,opacity=0.5,color=gray] at (-3/2,1/2) {};
\node (-31s) [circle,draw,inner sep=2pt,line width=1,opacity=0.5,color=gray] at (-3/2,1/2) {};
\node (-21) [circle,draw,inner sep=3pt,line width=1,opacity=0.5,color=gray] at (-2/2,1/2) {};
\node (-21s) [circle,draw,inner sep=2pt,line width=1,opacity=0.5,color=gray] at (-2/2,1/2) {};
\node (-11) [circle,draw,inner sep=3pt,line width=1,opacity=1] at (-1/2,1/2) {};
\node (-11s) [circle,draw,inner sep=2pt,line width=1,opacity=1] at (-1/2,1/2) {};

% NE BLOCK
% ROW 3
\node (13) [circle,draw,inner sep=3pt,line width=1,opacity=0.5,color=gray] at (1/2,3/2) {};
\node (13s) [rectangle,draw,inner sep=2pt,line width=1,opacity=0.5,color=gray] at (1/2,3/2) {};
\node (23) [circle,draw,inner sep=3pt,line width=1,opacity=0.5,color=gray] at (2/2,3/2) {};
\node (23s) [rectangle,draw,inner sep=2pt,line width=1,opacity=0.5,color=gray] at (2/2,3/2) {};
\node (33) [circle,draw,inner sep=3pt,line width=1,opacity=0.5,color=gray] at (3/2,3/2) {};
\node (33s) [rectangle,draw,inner sep=2pt,line width=1,opacity=0.5,color=gray] at (3/2,3/2) {};
% ROW 2
\node (12) [circle,draw,inner sep=3pt,line width=1,opacity=1] at (1/2,2/2) {};
\node (12s) [rectangle,draw,inner sep=2pt,line width=1,opacity=1] at (1/2,2/2) {};
\node (22) [circle,draw,inner sep=3pt,line width=1,opacity=1] at (2/2,2/2) {};
\node (22s) [rectangle,draw,inner sep=2pt,line width=1,opacity=1] at (2/2,2/2) {};
\node (32) [circle,draw,inner sep=3pt,line width=1,opacity=0.5,color=gray] at (3/2,2/2) {};
\node (32s) [rectangle,draw,inner sep=2pt,line width=1,opacity=0.5,color=gray] at (3/2,2/2) {};
% ROW 1
\node (11) [circle,draw,inner sep=3pt,line width=1,opacity=1] at (1/2,1/2) {};
\node (11s) [rectangle,draw,inner sep=2pt,line width=1,opacity=1] at (1/2,1/2) {};
\node (21) [circle,draw,inner sep=3pt,line width=1,opacity=1] at (2/2,1/2) {};
\node (21s) [rectangle,draw,inner sep=2pt,line width=1,opacity=1] at (2/2,1/2) {};
\node (31) [circle,draw,inner sep=3pt,line width=1,opacity=0.5,color=gray] at (3/2,1/2) {};
\node (31s) [rectangle,draw,inner sep=2pt,line width=1,opacity=0.5,color=gray] at (3/2,1/2) {};

% SW BLOCK
% ROW -1
\node (-3-1) [rectangle,draw,inner sep=4pt,line width=1,opacity=0.5,color=gray] at (-3/2,-1/2) {};
\node (-3-1s) [circle,draw,inner sep=2pt,line width=1,opacity=0.5,color=gray] at (-3/2,-1/2) {};
\node (-2-1) [rectangle,draw,inner sep=4pt,line width=1,opacity=0.5,color=gray] at (-2/2,-1/2) {};
\node (-2-1s) [circle,draw,inner sep=2pt,line width=1,opacity=0.5,color=gray] at (-2/2,-1/2) {};
\node (-1-1) [rectangle,draw,inner sep=4pt,line width=1,opacity=0.5,color=gray] at (-1/2,-1/2) {};
\node (-1-1s) [circle,draw,inner sep=2pt,line width=1,opacity=0.5,color=gray] at (-1/2,-1/2) {};
% ROW -2
\node (-3-2) [rectangle,draw,inner sep=4pt,line width=1,opacity=0.5,color=gray] at (-3/2,-2/2) {};
\node (-3-2s) [circle,draw,inner sep=2pt,line width=1,opacity=0.5,color=gray] at (-3/2,-2/2) {};
\node (-2-2) [rectangle,draw,inner sep=4pt,line width=1,opacity=0.5,color=gray] at (-2/2,-2/2) {};
\node (-2-2s) [circle,draw,inner sep=2pt,line width=1,opacity=0.5,color=gray] at (-2/2,-2/2) {};
\node (-1-2) [rectangle,draw,inner sep=4pt,line width=1,opacity=0.5,color=gray,fill=orange] at (-1/2,-2/2) {};
\node (-1-2s) [circle,draw,inner sep=2pt,line width=1,opacity=0.5,color=gray] at (-1/2,-2/2) {};
% ROW -1
\node (-3-3) [rectangle,draw,inner sep=4pt,line width=1,opacity=0.5,color=gray] at (-3/2,-3/2) {};
\node (-3-3s) [circle,draw,inner sep=2pt,line width=1,opacity=0.5,color=gray] at (-3/2,-3/2) {};
\node (-2-3) [rectangle,draw,inner sep=4pt,line width=1,opacity=0.5,color=gray] at (-2/2,-3/2) {};
\node (-2-3s) [circle,draw,inner sep=2pt,line width=1,opacity=0.5,color=gray] at (-2/2,-3/2) {};
\node (-1-3) [rectangle,draw,inner sep=4pt,line width=1,opacity=0.5,color=gray] at (-1/2,-3/2) {};
\node (-1-3s) [circle,draw,inner sep=2pt,line width=1,opacity=0.5,color=gray] at (-1/2,-3/2) {};

% SE BLOCK
% ROW -1
\node (1-1) [rectangle,draw,inner sep=4pt,line width=1,opacity=0.5,color=gray] at (1/2,-1/2) {};
\node (1-1s) [rectangle,draw,inner sep=2pt,line width=1,opacity=0.5,color=gray] at (1/2,-1/2) {};
\node (2-1) [rectangle,draw,inner sep=4pt,line width=1,opacity=0.5,color=gray] at (2/2,-1/2) {};
\node (2-1s) [rectangle,draw,inner sep=2pt,line width=1,opacity=0.5,color=gray] at (2/2,-1/2) {};
\node (3-1) [rectangle,draw,inner sep=4pt,line width=1,opacity=0.5,color=gray] at (3/2,-1/2) {};
\node (3-1s) [rectangle,draw,inner sep=2pt,line width=1,opacity=0.5,color=gray] at (3/2,-1/2) {};
% ROW -2
\node (1-2) [rectangle,draw,inner sep=4pt,line width=1,opacity=1] at (1/2,-2/2) {};
\node (1-2s) [rectangle,draw,inner sep=2pt,line width=1,opacity=1] at (1/2,-2/2) {};
\node (2-2) [rectangle,draw,inner sep=4pt,line width=1,opacity=1] at (2/2,-2/2) {};
\node (2-2s) [rectangle,draw,inner sep=2pt,line width=1,opacity=1] at (2/2,-2/2) {};
\node (3-2) [rectangle,draw,inner sep=4pt,line width=1,opacity=0.5,color=gray] at (3/2,-2/2) {};
\node (3-2s) [rectangle,draw,inner sep=2pt,line width=1,opacity=0.5,color=gray] at (3/2,-2/2) {};
% ROW -3
\node (1-3) [rectangle,draw,inner sep=4pt,line width=1,opacity=0.5,color=gray] at (1/2,-3/2) {};
\node (1-3s) [rectangle,draw,inner sep=2pt,line width=1,opacity=0.5,color=gray] at (1/2,-3/2) {};
\node (2-3) [rectangle,draw,inner sep=4pt,line width=1,opacity=0.5,color=gray] at (2/2,-3/2) {};
\node (2-3s) [rectangle,draw,inner sep=2pt,line width=1,opacity=0.5,color=gray] at (2/2,-3/2) {};
\node (3-3) [rectangle,draw,inner sep=4pt,line width=1,opacity=0.5,color=gray] at (3/2,-3/2) {};
\node (3-3s) [rectangle,draw,inner sep=2pt,line width=1,opacity=0.5,color=gray] at (3/2,-3/2) {};

% Top Code
\node (-35) [circle,draw,inner sep=3pt,line width=1,opacity=1] at (-3/2,5/2) {};
\node (-25) [circle,draw,inner sep=3pt,line width=1,opacity=1] at (-2/2,5/2) {};
\node (-15) [circle,draw,inner sep=3pt,line width=1,opacity=1] at (-1/2,5/2) {};
\node (15) [rectangle,draw,inner sep=4pt,line width=1,opacity=1] at (1/2,5/2) {};
\node (25) [rectangle,draw,inner sep=4pt,line width=1,opacity=1] at (2/2,5/2) {};
\node (35) [rectangle,draw,inner sep=4pt,line width=1,opacity=1] at (3/2,5/2) {};

\path (-15) edge[line width=1,bend left=40,looseness=0.75] (15);
\path (-15) edge[line width=1,bend left=40,looseness=0.75] (25);
\path (-25) edge[line width=1,bend left=40,looseness=0.75] (25);
\path (-25) edge[line width=1,bend left=40,looseness=0.75] (35);
\path (-35) edge[line width=1,bend left=40,looseness=0.75] (15);
\path (-35) edge[line width=1,bend left=40,looseness=0.75] (35);

% Left Code
\node (-53) [circle,draw,inner sep=3pt,line width=1,opacity=1] at (-5/2,3/2) {};
\node (-52) [circle,draw,inner sep=3pt,line width=1,opacity=1] at (-5/2,2/2) {};
\node (-51) [circle,draw,inner sep=3pt,line width=1,opacity=1] at (-5/2,1/2) {};
\node (-5-1) [rectangle,draw,inner sep=4pt,line width=1,opacity=1] at (-5/2,-1/2) {};
\node (-5-2) [rectangle,draw,inner sep=4pt,line width=1,opacity=1] at (-5/2,-2/2) {};
\node (-5-3) [rectangle,draw,inner sep=4pt,line width=1,opacity=1] at (-5/2,-3/2) {};

\path (-51) edge[line width=1,bend left=-40,looseness=0.75] (-5-1);
\path (-51) edge[line width=1,bend left=-40,looseness=0.75] (-5-2);
\path (-52) edge[line width=1,bend left=-40,looseness=0.75] (-5-2);
\path (-52) edge[line width=1,bend left=-40,looseness=0.75] (-5-3);
\path (-53) edge[line width=1,bend left=-40,looseness=0.75] (-5-1);
\path (-53) edge[line width=1,bend left=-40,looseness=0.75] (-5-3);

% X-type stabilizer generator
\path (-12) edge[line width=1,bend left=-40,looseness=0.75,color=orange,style=dashed,opacity=0.75] (-1-2);
\path (-11) edge[line width=1,bend left=-40,looseness=0.75,color=orange,style=dashed,opacity=0.75] (-1-2);
\path (-1-2) edge[line width=1,bend left=-40,looseness=0.75,color=orange,style=dashed,opacity=0.75] (1-2);
\path (-1-2) edge[line width=1,bend left=-40,looseness=0.75,color=orange,style=dashed,opacity=0.75] (2-2);

% Stabilizer-support adjacent checks
\path (-12) edge[line width=1,bend left=40,looseness=0.75] (12);
\path (-12) edge[line width=1,bend left=40,looseness=0.75] (22);
\path (-11) edge[line width=1,bend left=-40,looseness=0.75] (11);
\path (-11) edge[line width=1,bend left=-40,looseness=0.75] (21);
\path (1-2) edge[line width=1,bend left=40,looseness=0.75] (12);
\path (1-2) edge[line width=1,bend left=40,looseness=0.75] (11);
\path (2-2) edge[line width=1,bend left=-40,looseness=0.75] (22);
\path (2-2) edge[line width=1,bend left=-40,looseness=0.75] (21);

% Stabilizer boxes
\node (b1nw) at (-1.5/2,2.5/2) {};
\node (b1ne) at (-0.5/2,2.5/2) {};
\node (b1se) at (-0.5/2,0.5/2) {};
\node (b1sw) at (-1.5/2,0.5/2) {};

\path (b1nw.center) edge [line width=1,dotted,color=blue] (b1ne.center);
\path (b1ne.center) edge [line width=1,dotted,color=blue] (b1se.center);
\path (b1se.center) edge [line width=1,dotted,color=blue] (b1sw.center);
\path (b1sw.center) edge [line width=1,dotted,color=blue] (b1nw.center);

\node (b2nw) at (0.5/2,-1.5/2) {};
\node (b2ne) at (2.5/2,-1.5/2) {};
\node (b2se) at (2.5/2,-2.5/2) {};
\node (b2sw) at (0.5/2,-2.5/2) {};

\path (b2nw.center) edge [line width=1,dotted,color=blue] (b2ne.center);
\path (b2ne.center) edge [line width=1,dotted,color=blue] (b2se.center);
\path (b2se.center) edge [line width=1,dotted,color=blue] (b2sw.center);
\path (b2sw.center) edge [line width=1,dotted,color=blue] (b2nw.center);
\end{tikzpicture}
{
\begin{tabular}{ccl}$\tikz{\node (a) [rectangle,draw,thick,inner sep=4pt,fill=orange] at (0,0) {};\node (b) [circle,draw,thick,inner sep=2pt] at (0,0) {};}$&=&chosen X-stab.\\$\tikz{\node [rectangle,draw,thick,inner sep=5pt,color=blue,dotted] at (0,0) {};}$&=&stopping set\end{tabular}
}
\caption{Example of a stabilizer stopping set for the distance 3 surface code obtained from a hypergraph product of two 3-bit repetition codes. The qubits in the support of an $X$-stabilizer are a stopping set for $T(H_Z)$. Highlighted nodes show the erasure subgraph corresponding to this stopping set.}
\label{fig:stabilizer_stopping_set}
\end{figure}

\begin{figure}
\centering
\includegraphics[scale=.35]{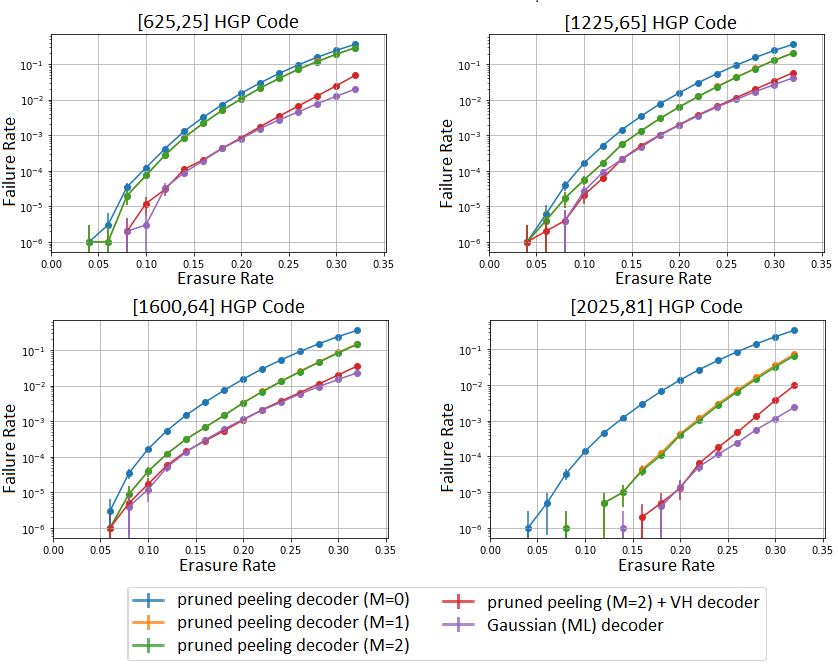}
\caption{Performance of the pruned peeling and VH decoders using four HGP codes and compared with the ML decoder ($10^6$ simulations per data point). Plots show the failure rates of the decoders for recovering an X-type Pauli error supported on the erasure vector, up to multiplication by a stabilizer.
In these plots, we discard the data points corresponding to fewer than 20 decoding failures to show only meaningful data. All the data points with at least 20 failures are shown. Note that the curves for $M=1$ and $M=2$ virtually overlap in all cases.
}
\label{fig:plots}
\end{figure}

Fig.~\ref{fig:stabilizer_stopping_set} shows an example of a stabilizer stopping set for a very simple HGP code. The pruned peeling decoder is designed to break out of stopping sets of this form. Fig.~\ref{fig:plots} shows the performance of HGP codes equipped with the pruned peeling decoder with $M = 0,1,2$. The input Tanner graphs for these codes are generated using the standard progressive edge growth algorithm which is commonly used to produce good classical or quantum LDPC codes~\cite{XEA01}. We use the implementation~\cite{XEA05,github2} of the progressive edge growth algorithm. The pruning strategy only slightly improves over the classical peeling decoder and increasing $M$ beyond $M = 1$ does not significantly affect the performance. To understand why the ML decoder severely outperforms the pruned peeling decoder, we analyze its most common stopping sets with HGP codes.

\section{Stopping sets of the pruned peeling decoder}

By studying the failure configurations of the pruned peeling decoder, we observe that the gap between the pruned peeling decoder and the ML decoder is due to the following stopping sets of HGP codes.

\begin{lemma} [Horizontal and vertical stopping sets] \label{lemma:pruned_decoder_stopping_sets}
If $R_B$ is a stopping set for a Tanner graph $T(H)$, then for all $b \in B$ the set $\{b\} \times R_B$ is a stopping set for the Tanner graph $T(\HH_Z)$ of the HGP code $\HGP(H)$.
If $R_A$ is a stopping set for a Tanner graph $T(H^T)$, then for all $a' \in A$ the set $R_A \times \{a'\}$ is a stopping set for the Tanner graph $T(\HH_Z)$ of the HGP code $\HGP(H)$.
\end{lemma}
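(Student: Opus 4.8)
\emph{Proof plan.} The plan is to reduce everything to the combinatorial characterization of a stopping set: a set $S$ of bits of a Tanner graph is a stopping set if and only if every check incident to $S$ is incident to at least two bits of $S$. This is the only property of $S_B$ and $S_A$ that will be used. To show that a candidate qubit set $Q$ is a stopping set for $T(\HH_Z)$, I will show that every $Z$-type stabilizer generator of $\HGP(H)$ whose support meets $Q$ meets it in at least two qubits, hence is not a dangling generator.

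First I would treat the horizontal case. Fix $b \in B$ and let $Q = \{b\} \times S_B$, a set of qubits of the $B \times B$ block. From the HGP construction, the $Z$-generator indexed by $(b'', a') \in B \times A$ acts nontrivially on a qubit $(b, b')$ of $Q$ only when $b'' = b$ and $\{a', b'\} \in E_H$; it touches no $A \times A$ qubit of $Q$ since $Q$ contains none. Hence the generators meeting $Q$ are exactly those of the form $(b, a')$ where the check $a'$ of $T(H)$ is adjacent to at least one bit of $S_B$, and for such a generator the intersection of its support with $Q$ equals $\{b\} \times (N(a') \cap S_B)$, where $N(a')$ denotes the neighborhood of $a'$ in $T(H)$. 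Since $a'$ is incident to $S_B$ and $S_B$ is a stopping set, $|N(a') \cap S_B| \geq 2$, so this generator meets $Q$ in at least two qubits. Therefore $Q$ has no dangling generator and is a stopping set for $T(\HH_Z)$.

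The vertical case is symmetric, modulo one bookkeeping point: $T(H^T)$ is the same bipartite graph as $T(H)$ with the roles of $A$ and $B$ exchanged, so "$S_A$ is a stopping set for $T(H^T)$" means that every vertex of $B$ adjacent to $S_A$ is adjacent to at least two vertices of $S_A$. Fixing $a' \in A$, set $Q' = S_A \times \{a'\}$, a set of qubits of the $A \times A$ block. A $Z$-generator $(b, a'')$ meets $Q'$ only when $a'' = a'$ and $b$ is adjacent in $T(H)$ to some vertex of $S_A$, and then its support meets $Q'$ in $(N(b) \cap S_A) \times \{a'\}$; the stopping-set property of $S_A$ forces $|N(b) \cap S_A| \geq 2$, so $Q'$ has no dangling generator either.

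I do not expect a genuine difficulty here: once the HGP incidence conventions are pinned down, each case is essentially a one-line unwinding. The only thing to be careful about is the indexing bookkeeping — in particular that the candidate set lives entirely in a single qubit block, so that a two-qubit lower bound on one coordinate slice already certifies that no generator is dangling, and that passing from $H$ to $H^T$ swaps which side of the bipartition plays the role of checks.
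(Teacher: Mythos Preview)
Your proof is correct and follows essentially the same approach as the paper: identify which $Z$-generators can meet the candidate qubit set (only those with the matching fixed coordinate), observe that their restriction to the candidate set is exactly a check of $T(H)$ (resp.\ $T(H^T)$) acting on $S_B$ (resp.\ $S_A$), and invoke the classical stopping-set property. The paper's argument is simply a terser version of yours; the only cosmetic discrepancy is that the paper calls $\{b\}\times S_B$ \emph{vertical} and $S_A\times\{a'\}$ \emph{horizontal}, the reverse of your labeling.
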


\begin{proof}
Consider a stopping set $R_B$ for $T(H)$. Any $Z$-type stabilizer generator acting on $\{b\} \times R_B$ must be indexed by $(b, a')$ for some $a'$. Moreover, the restriction of these stabilizers to $\{b\} \times R_B$ are checks for the linear code $\ker H$. Therefore $\{b\} \times R_B$ is a stopping set for $T(\HH_Z)$.
The second case is similar.
\end{proof}

We refer to the stopping sets $\{b\} \times R_B$ as {\em vertical stopping sets} and $R_A \times \{a'\}$ are {\em horizontal stopping sets}.
Numerically, we observe that these stopping sets are responsible for the vast majority of the failures of the pruned peeling decoder.
This is because the quantum Tanner graph $T(\HH_Z)$ contains on the order of $\sqrt{N}$ copies of the type $\{b\} \times R_B$ for each stopping set $R_B$ of $T(H)$ and $\sqrt{N}$ copies of each stopping set of $T(H^T)$.
Our idea is to use the Gaussian decoders of the classical codes $\ker H$ and $\ker H^T$ to correct these stopping sets.

\section{VH decoder}

The Vertical-Horizontal (VH) decoder is based on the decomposition of the erasure into vertical subsets of the form $\{b\} \times \varepsilon_b$ with $b \in B$ and $\varepsilon_b \subseteq B$, and horizontal subsets of the form $\varepsilon_{a'} \times \{a'\}$ with $a' \in A$ and $\varepsilon_{a'} \subseteq A$, that will be decoded using the Gaussian decoder.

Let $T_v$ (resp.~$T_h$) be the subgraph of $T(\HH_Z)$ induced by the vertices of $B \times (A \cup B)$ (resp.~$(A \cup B) \times A$). The graph $T_v$ is made with the vertical edges of $T(\HH_Z)$ and $T_h$ is made with its horizontal edges.
Given an erasure vector $\varepsilon$, denote by $V(\varepsilon)$ the set of vertices of $T(\HH_Z)$ that are either erased qubits or check nodes incident to an erased qubit.
A {\em vertical cluster} (resp.~{\em horizontal cluster}) is a subset of $V(\varepsilon)$ that is a connected component for the graph $T_v$ (resp.~$T_h$).

The {\em VH graph} of $\varepsilon$ is defined to be the graph whose vertices are the clusters and two clusters are connected iff their intersection is non-empty.

The following proposition provides some insights on the structure of the VH graph.
\begin{proposition} \label{prop:VH_graph_structure}
The VH graph is a bipartite graph where each edge connects a vertical cluster with a horizontal cluster.
There is a one-to-one correspondence between (i) the check nodes of $T(\HH_Z)$ that belong to one vertical cluster and one horizontal cluster and (ii) the edges of the VH graph.
\end{proposition}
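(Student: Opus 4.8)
The plan is to exploit the slice structure of the graphs $T_v$ and $T_h$. A vertical edge of $T(\HH_Z)$ joins a check $(b,a')$ to a qubit $(b,b')$ with $\{a',b'\} \in E_H$, so it never changes the first coordinate; a horizontal edge joins a check $(b,a')$ to a qubit $(a,a')$ with $\{a,b\} \in E_H$, so it never changes the second coordinate. Two elementary observations then follow. First, every connected component of $T_v$, hence every vertical cluster, is contained in a single vertical slice $\{b_0\} \times (A \cup B)$, and symmetrically every horizontal cluster is contained in a single horizontal slice $(A \cup B) \times \{a'_0\}$. Second, among the vertices of $T(\HH_Z)$, the graph $T_v$ contains all the check nodes and only the qubits of the form $(b,b')$, whereas $T_h$ contains all the check nodes and only the qubits of the form $(a,a')$.

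From the second observation, a vertical cluster and a horizontal cluster can intersect only in check nodes, since a $(b,b')$ qubit never appears in $T_h$ and an $(a,a')$ qubit never appears in $T_v$; moreover two distinct vertical clusters (resp.\ two distinct horizontal clusters) are disjoint, being distinct connected components of the same graph. Hence every edge of the VH graph joins a vertical cluster to a horizontal cluster, which is the first assertion. Combining this with the first observation sharpens the picture: if a vertical cluster $U$ lies in the slice $\{b_0\} \times (A \cup B)$ and a horizontal cluster $W$ lies in the slice $(A \cup B) \times \{a'_0\}$, then every element of $U \cap W$ is a check node with first coordinate $b_0$ and second coordinate $a'_0$; therefore $U \cap W$ is either empty or equal to the singleton $\{(b_0,a'_0)\}$, so $|U \cap W| \le 1$.

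It remains to construct the bijection. Since the vertical clusters (resp.\ horizontal clusters) are the connected components of the subgraph of $T_v$ (resp.\ $T_h$) induced by the vertices in $V(\varepsilon)$, and since every check node of $T(\HH_Z)$ is a vertex of both $T_v$ and $T_h$, each check node $c \in V(\varepsilon)$ lies in exactly one vertical cluster $U(c)$ and exactly one horizontal cluster $W(c)$. Send such a $c$ to the pair $\{U(c), W(c)\}$, which is an edge of the VH graph because $c \in U(c) \cap W(c) \ne \emptyset$. This map is injective: if $\{U(c_1),W(c_1)\} = \{U(c_2),W(c_2)\}$, then by the first assertion the vertical members coincide and the horizontal members coincide, so $c_1, c_2 \in U(c_1) \cap W(c_1)$, and the bound $|U(c_1) \cap W(c_1)| \le 1$ forces $c_1 = c_2$. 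It is surjective: an edge of the VH graph is a pair of clusters with non-empty intersection, that intersection consists of check nodes by the argument above, and any check node $c$ in it satisfies $U(c) = U$ and $W(c) = W$.

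I do not expect a substantive obstacle here; the one thing to get right is the bookkeeping behind the two observations of the first paragraph, namely being careful about which qubit and check labels appear in $T_v$ and in $T_h$. Once the slice decomposition is in hand, both the bipartiteness and the ``at most one common check node'' property — and with them the whole proposition — follow immediately.
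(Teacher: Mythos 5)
Your proof is correct and follows essentially the same route as the paper's: both rest on the observation that a vertical (resp.\ horizontal) cluster lies in a single slice $\{b_1\}\times(A\cup B)$ (resp.\ $(A\cup B)\times\{a_1'\}$), so same-orientation clusters are disjoint and a vertical and a horizontal cluster can meet only in the single check node $(b_1,a_1')$, from which the bijection with edges of the VH graph follows. You merely spell out the injectivity/surjectivity bookkeeping that the paper leaves implicit.
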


\begin{proof}
Because the graph $T_v$ contains only vertical edges, any vertical cluster must be a subset of $\{b_1\} \times (A \cup B)$ for some $b_1 \in B$. Similarly, any horizontal cluster is a subset of $(A \cup B) \times \{a_1'\}$ for some $a_1' \in A$.
As a result, two clusters with the same orientation (horizontal or vertical) cannot intersect and the only possible intersection between a cluster included in $\{b_1\} \times (A \cup B)$  and a cluster included in $(A \cup B) \times \{a_1'\}$ is the check node $(b_1, a_1')$. The bijection between check nodes and edges of the VH graph follows.
\end{proof}

\begin{figure*}[]
\centering
{
\begin{tikzpicture}[baseline=(current bounding box.center)]
% Corner nodes
\node (nw) [inner sep=0] at (-4/2,6/2) {};
\node (n) [inner sep=0] at (0,6/2) {};
\node (ne) [inner sep=0] at (3/2,6/2) {};
\node (w) [inner sep=0] at (-4/2,0) {};
\node (c) [inner sep=0] at (0,0) {};
\node (e) [inner sep=0] at (3/2,0) {};
\node (sw) [inner sep=0] at (-4/2,-4/2) {};
\node (s) [inner sep=0] at (0,-4/2) {};
\node (se) [inner sep=0] at (3/2,-4/2) {};

% Box
\path (nw.center) edge[line width=1,opacity=1] (ne.center);
\path (w.center) edge[line width=1,dashed,opacity=1] (e.center);
\path (sw.center) edge[line width=1,opacity=1] (se.center);
\path (nw.center) edge[line width=1,opacity=1] (sw.center);
\path (n.center) edge[line width=1,dashed,opacity=1] (s.center);
\path (ne.center) edge[line width=1,opacity=1] (se.center);

% Purple vertical cluster box
\node (pnw) at (0.5/2,3.5/2) {};
\node (pne) at (1.5/2,3.5/2) {};
\node (pse) at (1.5/2,-2.5/2) {};
\node (psw) at (0.5/2,-2.5/2) {};
\path (pnw.center) edge[line width=1,dotted,color=violet] (pne.center);
\path (pne.center) edge[line width=1,dotted,color=violet] (pse.center);
\path (pse.center) edge[line width=1,dotted,color=violet] (psw.center);
\path (psw.center) edge[line width=1,dotted,color=violet] (pnw.center);

% NW BLOCK
% ROW 5
\node (-35) [circle,draw,inner sep=3pt,line width=1,opacity=1] at (-3/2,5/2) {};
\node (-35s) [circle,draw,inner sep=2pt,line width=1,opacity=1] at (-3/2,5/2) {};
\node (-25) [circle,draw,inner sep=3pt,line width=1,opacity=1] at (-2/2,5/2) {};
\node (-25s) [circle,draw,inner sep=2pt,line width=1,opacity=1] at (-2/2,5/2) {};
\node (-15) [circle,draw,inner sep=3pt,line width=1,opacity=0.5,color=gray] at (-1/2,5/2) {};
\node (-15s) [circle,draw,inner sep=2pt,line width=1,opacity=0.5,color=gray] at (-1/2,5/2) {};
% ROW 4
\node (-34) [circle,draw,inner sep=3pt,line width=1,opacity=0.5,color=gray] at (-3/2,4/2) {};
\node (-34s) [circle,draw,inner sep=2pt,line width=1,opacity=0.5,color=gray] at (-3/2,4/2) {};
\node (-24) [circle,draw,inner sep=3pt,line width=1,opacity=0.5,color=gray] at (-2/2,4/2) {};
\node (-24s) [circle,draw,inner sep=2pt,line width=1,opacity=0.5,color=gray] at (-2/2,4/2) {};
\node (-14) [circle,draw,inner sep=3pt,line width=1,opacity=0.5,color=gray] at (-1/2,4/2) {};
\node (-14s) [circle,draw,inner sep=2pt,line width=1,opacity=0.5,color=gray] at (-1/2,4/2) {};
% ROW 3
\node (-33) [circle,draw,inner sep=3pt,line width=1,opacity=0.5,color=gray] at (-3/2,3/2) {};
\node (-33s) [circle,draw,inner sep=2pt,line width=1,opacity=0.5,color=gray] at (-3/2,3/2) {};
\node (-23) [circle,draw,inner sep=3pt,line width=1,opacity=1] at (-2/2,3/2) {};
\node (-23s) [circle,draw,inner sep=2pt,line width=1,opacity=1] at (-2/2,3/2) {};
\node (-13) [circle,draw,inner sep=3pt,line width=1,opacity=1] at (-1/2,3/2) {};
\node (-13s) [circle,draw,inner sep=2pt,line width=1,opacity=1] at (-1/2,3/2) {};
% ROW 2
\node (-32) [circle,draw,inner sep=3pt,line width=1,opacity=0.5,color=gray] at (-3/2,2/2) {};
\node (-32s) [circle,draw,inner sep=2pt,line width=1,opacity=0.5,color=gray] at (-3/2,2/2) {};
\node (-22) [circle,draw,inner sep=3pt,line width=1,opacity=0.5,color=gray] at (-2/2,2/2) {};
\node (-22s) [circle,draw,inner sep=2pt,line width=1,opacity=0.5,color=gray] at (-2/2,2/2) {};
\node (-12) [circle,draw,inner sep=3pt,line width=1,opacity=0.5,color=gray] at (-1/2,2/2) {};
\node (-12s) [circle,draw,inner sep=2pt,line width=1,opacity=0.5,color=gray] at (-1/2,2/2) {};
% ROW 1
\node (-31) [circle,draw,inner sep=3pt,line width=1,opacity=0.5,color=gray] at (-3/2,1/2) {};
\node (-31s) [circle,draw,inner sep=2pt,line width=1,opacity=0.5,color=gray] at (-3/2,1/2) {};
\node (-21) [circle,draw,inner sep=3pt,line width=1,opacity=0.5,color=gray] at (-2/2,1/2) {};
\node (-21s) [circle,draw,inner sep=2pt,line width=1,opacity=0.5,color=gray] at (-2/2,1/2) {};
\node (-11) [circle,draw,inner sep=3pt,line width=1,opacity=0.5,color=gray] at (-1/2,1/2) {};
\node (-11s) [circle,draw,inner sep=2pt,line width=1,opacity=0.5,color=gray] at (-1/2,1/2) {};

% NE BLOCK
% ROW 5
\node (15) [circle,draw,inner sep=3pt,line width=1,opacity=1] at (1/2,5/2) {};
\node (15s) [rectangle,draw,inner sep=2pt,line width=1,opacity=1] at (1/2,5/2) {};
\node (25) [circle,draw,inner sep=3pt,line width=1,opacity=1] at (2/2,5/2) {};
\node (25s) [rectangle,draw,inner sep=2pt,line width=1,opacity=1] at (2/2,5/2) {};
% ROW 4
\node (14) [circle,draw,inner sep=3pt,line width=1,opacity=0.5,color=gray] at (1/2,4/2) {};
\node (14s) [rectangle,draw,inner sep=2pt,line width=1,opacity=0.5,color=gray] at (1/2,4/2) {};
\node (24) [circle,draw,inner sep=3pt,line width=1,opacity=0.5,color=gray] at (2/2,4/2) {};
\node (24s) [rectangle,draw,inner sep=2pt,line width=1,opacity=0.5,color=gray] at (2/2,4/2) {};
% ROW 3
\node (13) [circle,draw,inner sep=3pt,line width=1,opacity=1] at (1/2,3/2) {};
\node (13s) [rectangle,draw,inner sep=2pt,line width=1,opacity=1] at (1/2,3/2) {};
\node (13b) [rectangle,draw,inner sep=8pt,line width=1,opacity=1,color=red] at (1/2,3/2) {};
\node (23) [circle,draw,inner sep=3pt,line width=1,opacity=1] at (2/2,3/2) {};
\node (23s) [rectangle,draw,inner sep=2pt,line width=1,opacity=1] at (2/2,3/2) {};
% ROW 2
\node (12) [circle,draw,inner sep=3pt,line width=1,opacity=1] at (1/2,2/2) {};
\node (12s) [rectangle,draw,inner sep=2pt,line width=1,opacity=1] at (1/2,2/2) {};
\node (22) [circle,draw,inner sep=3pt,line width=1,opacity=0.5,color=gray] at (2/2,2/2) {};
\node (22s) [rectangle,draw,inner sep=2pt,line width=1,opacity=0.5,color=gray] at (2/2,2/2) {};
% ROW 1
\node (11) [circle,draw,inner sep=3pt,line width=1,opacity=1] at (1/2,1/2) {};
\node (11s) [rectangle,draw,inner sep=2pt,line width=1,opacity=1] at (1/2,1/2) {};
\node (21) [circle,draw,inner sep=3pt,line width=1,opacity=0.5,color=gray] at (2/2,1/2) {};
\node (21s) [rectangle,draw,inner sep=2pt,line width=1,opacity=0.5,color=gray] at (2/2,1/2) {};

% SW BLOCK
% ROW -1
\node (-3-1) [rectangle,draw,inner sep=4pt,line width=1,opacity=0.5,color=gray] at (-3/2,-1/2) {};
\node (-3-1s) [circle,draw,inner sep=2pt,line width=1,opacity=0.5,color=gray] at (-3/2,-1/2) {};
\node (-2-1) [rectangle,draw,inner sep=4pt,line width=1,opacity=0.5,color=gray] at (-2/2,-1/2) {};
\node (-2-1s) [circle,draw,inner sep=2pt,line width=1,opacity=0.5,color=gray] at (-2/2,-1/2) {};
\node (-1-1) [rectangle,draw,inner sep=4pt,line width=1,opacity=0.5,color=gray] at (-1/2,-1/2) {};
\node (-1-1s) [circle,draw,inner sep=2pt,line width=1,opacity=0.5,color=gray] at (-1/2,-1/2) {};
% ROW -2
\node (-3-2) [rectangle,draw,inner sep=4pt,line width=1,opacity=0.5,color=gray] at (-3/2,-2/2) {};
\node (-3-2s) [circle,draw,inner sep=2pt,line width=1,opacity=0.5,color=gray] at (-3/2,-2/2) {};
\node (-2-2) [rectangle,draw,inner sep=4pt,line width=1,opacity=0.5,color=gray] at (-2/2,-2/2) {};
\node (-2-2s) [circle,draw,inner sep=2pt,line width=1,opacity=0.5,color=gray] at (-2/2,-2/2) {};
\node (-1-2) [rectangle,draw,inner sep=4pt,line width=1,opacity=0.5,color=gray] at (-1/2,-2/2) {};
\node (-1-2s) [circle,draw,inner sep=2pt,line width=1,opacity=0.5,color=gray] at (-1/2,-2/2) {};
% ROW -1
\node (-3-3) [rectangle,draw,inner sep=4pt,line width=1,opacity=0.5,color=gray] at (-3/2,-3/2) {};
\node (-3-3s) [circle,draw,inner sep=2pt,line width=1,opacity=0.5,color=gray] at (-3/2,-3/2) {};
\node (-2-3) [rectangle,draw,inner sep=4pt,line width=1,opacity=0.5,color=gray] at (-2/2,-3/2) {};
\node (-2-3s) [circle,draw,inner sep=2pt,line width=1,opacity=0.5,color=gray] at (-2/2,-3/2) {};
\node (-1-3) [rectangle,draw,inner sep=4pt,line width=1,opacity=0.5,color=gray] at (-1/2,-3/2) {};
\node (-1-3s) [circle,draw,inner sep=2pt,line width=1,opacity=0.5,color=gray] at (-1/2,-3/2) {};

% SE BLOCK
% ROW -1
\node (1-1) [rectangle,draw,inner sep=4pt,line width=1,opacity=1] at (1/2,-1/2) {};
\node (1-1s) [rectangle,draw,inner sep=2pt,line width=1,opacity=1] at (1/2,-1/2) {};
\node (2-1) [rectangle,draw,inner sep=4pt,line width=1,opacity=0.5,color=gray] at (2/2,-1/2) {};
\node (2-1s) [rectangle,draw,inner sep=2pt,line width=1,opacity=0.5,color=gray] at (2/2,-1/2) {};
% ROW -2
\node (1-2) [rectangle,draw,inner sep=4pt,line width=1,opacity=1] at (1/2,-2/2) {};
\node (1-2s) [rectangle,draw,inner sep=2pt,line width=1,opacity=1] at (1/2,-2/2) {};
\node (2-2) [rectangle,draw,inner sep=4pt,line width=1,opacity=0.5,color=gray] at (2/2,-2/2) {};
\node (2-2s) [rectangle,draw,inner sep=2pt,line width=1,opacity=0.5,color=gray] at (2/2,-2/2) {};
% ROW -3
\node (1-3) [rectangle,draw,inner sep=4pt,line width=1,opacity=0.5,color=gray] at (1/2,-3/2) {};
\node (1-3s) [rectangle,draw,inner sep=2pt,line width=1,opacity=0.5,color=gray] at (1/2,-3/2) {};
\node (2-3) [rectangle,draw,inner sep=4pt,line width=1,opacity=0.5,color=gray] at (2/2,-3/2) {};
\node (2-3s) [rectangle,draw,inner sep=2pt,line width=1,opacity=0.5,color=gray] at (2/2,-3/2) {};

% Spaceing
\node (arrow) at (4.5/2,2/2) {\huge$\Rightarrow$};
\node (space) at (0/2,-4.5/2) {};
\end{tikzpicture}
}
{
\begin{tikzpicture}[baseline=(current bounding box.center)]
% Corner nodes
\node (nw) [inner sep=0] at (-4/2,6/2) {};
\node (n) [inner sep=0] at (0,6/2) {};
\node (ne) [inner sep=0] at (3/2,6/2) {};
\node (w) [inner sep=0] at (-4/2,0) {};
\node (c) [inner sep=0] at (0,0) {};
\node (e) [inner sep=0] at (3/2,0) {};
\node (sw) [inner sep=0] at (-4/2,-4/2) {};
\node (s) [inner sep=0] at (0,-4/2) {};
\node (se) [inner sep=0] at (3/2,-4/2) {};

% Box
\path (nw.center) edge[line width=1,opacity=1] (ne.center);
\path (w.center) edge[line width=1,dashed,opacity=1] (e.center);
\path (sw.center) edge[line width=1,opacity=1] (se.center);
\path (nw.center) edge[line width=1,opacity=1] (sw.center);
\path (n.center) edge[line width=1,dashed,opacity=1] (s.center);
\path (ne.center) edge[line width=1,opacity=1] (se.center);

% Cluster 1
\node (c1nw) at (-3.5/2,5.5/2) {};
\node (c1ne) at (-1.5/2,5.5/2) {};
\node (c1se) at (-1.5/2,4.5/2) {};
\node (c1sw) at (-3.5/2,4.5/2) {};
\path (c1nw.center) edge[line width=1,color=blue] (c1ne.center);
\path (c1ne.center) edge[line width=1,color=blue] (c1se.center);
\path (c1se.center) edge[line width=1,color=blue] (c1sw.center);
\path (c1sw.center) edge[line width=1,color=blue] (c1nw.center);

% Cluster 2
\node (c2nw) at (-2.5/2,3.5/2) {};
\node (c2ne) at (-0.5/2,3.5/2) {};
\node (c2se) at (-0.5/2,2.5/2) {};
\node (c2sw) at (-2.5/2,2.5/2) {};
\path (c2nw.center) edge[line width=1,color=blue] (c2ne.center);
\path (c2ne.center) edge[line width=1,color=blue] (c2se.center);
\path (c2se.center) edge[line width=1,color=blue] (c2sw.center);
\path (c2sw.center) edge[line width=1,color=blue] (c2nw.center);
\node (c2e) at (-0.5/2,3/2) {};
\path (c2e.center) edge[line width=2] (13b.west);
\node (c2s) at (-1.5/2,2.5/2) {};

% Cluster 3
\node (c3nw) at (0.5/2,-0.5/2) {};
\node (c3ne) at (1.5/2,-0.5/2) {};
\node (c3se) at (1.5/2,-2.5/2) {};
\node (c3sw) at (0.5/2,-2.5/2) {};
\path (c3nw.center) edge[line width=1,color=blue] (c3ne.center);
\path (c3ne.center) edge[line width=1,color=blue] (c3se.center);
\path (c3se.center) edge[line width=1,color=blue] (c3sw.center);
\path (c3sw.center) edge[line width=1,color=blue] (c3nw.center);
\node (c3n) at (1/2,-0.5/2) {};
\path (c3n.center) edge[line width=2] (13b.south);
\node (c3w) at (0.5/2,-1.5/2) {};

% Cluster labels
\node (c1label) at (-2.5/2,4.2/2) {isolated};
\node (c23label) at (-2/2,1/2) {dangling};
\path (c23label.north) edge[line width=1,->] (c2s.south);
\path (c23label.south) edge[line width=1,->] (c3w.west);
\node (cclabel) at (1.5/2,4.25/2) {$\begin{matrix}\text{connect--}\\\text{ing check}\end{matrix}$};

% NW BLOCK
% ROW 5
\node (-35) [circle,draw,inner sep=3pt,line width=1,opacity=1] at (-3/2,5/2) {};
\node (-35s) [circle,draw,inner sep=2pt,line width=1,opacity=1] at (-3/2,5/2) {};
\node (-25) [circle,draw,inner sep=3pt,line width=1,opacity=1] at (-2/2,5/2) {};
\node (-25s) [circle,draw,inner sep=2pt,line width=1,opacity=1] at (-2/2,5/2) {};
% ROW 4
% ROW 3
\node (-23) [circle,draw,inner sep=3pt,line width=1,opacity=1] at (-2/2,3/2) {};
\node (-23s) [circle,draw,inner sep=2pt,line width=1,opacity=1] at (-2/2,3/2) {};
\node (-13) [circle,draw,inner sep=3pt,line width=1,opacity=1] at (-1/2,3/2) {};
\node (-13s) [circle,draw,inner sep=2pt,line width=1,opacity=1] at (-1/2,3/2) {};
% ROW 2
% ROW 1

% NE BLOCK
% ROW 5
% ROW 4
% ROW 3
\node (13) [circle,draw,inner sep=3pt,line width=1,opacity=1] at (1/2,3/2) {};
\node (13s) [rectangle,draw,inner sep=2pt,line width=1,opacity=1] at (1/2,3/2) {};
\node (13b) [rectangle,draw,inner sep=7pt,line width=1,opacity=1,color=red] at (1/2,3/2) {};
% ROW 2
% ROW 1

% SW BLOCK
% ROW -1
% ROW -2
% ROW -1

% SE BLOCK
% ROW -1
\node (1-1) [rectangle,draw,inner sep=4pt,line width=1,opacity=1] at (1/2,-1/2) {};
\node (1-1s) [rectangle,draw,inner sep=2pt,line width=1,opacity=1] at (1/2,-1/2) {};
% ROW -2
\node (1-2) [rectangle,draw,inner sep=4pt,line width=1,opacity=1] at (1/2,-2/2) {};
\node (1-2s) [rectangle,draw,inner sep=2pt,line width=1,opacity=1] at (1/2,-2/2) {};
% ROW -3

% Spacing
\node (space) at (0/2,-4.5/2) {};
\end{tikzpicture}
}
\\

(a) Erasure subgraph and corresponding VH graph

{
\begin{tikzpicture}[baseline=(current bounding box.center)]
    \node (space1) at (0,2.5) {};
    
    % Base Graph
    \node (a1) [rectangle,draw,inner sep=6pt,line width=1] at (0,1) {};
    \node (a1s) [rectangle,draw,inner sep=4.5pt,line width=1] at (0,1) {};
    \node (a2) [rectangle,draw,inner sep=6pt,line width=1] at (0,0) {};
    \node (a2s) [rectangle,draw,inner sep=4.5pt,line width=1] at (0,0) {};
    
    \node (a3) [circle,draw,inner sep=5pt,line width=1] at (1,2) {};
    \node (a3s) [rectangle,draw,inner sep=4.5pt,line width=1] at (1,2) {};
    \node (a3box) [rectangle,draw,inner sep=9pt,line width=1,color=red] at (1,2) {};
    \node (a4) [circle,draw,inner sep=5pt,line width=1] at (1,1) {};
    \node (a4s) [rectangle,draw,inner sep=4.5pt,line width=1] at (1,1) {};
    \node (a5) [circle,draw,inner sep=5pt,line width=1] at (1,0) {};
    \node (a5s) [rectangle,draw,inner sep=4.5pt,line width=1] at (1,0) {};
    
    \path (a1) edge[line width=1] (a3);
    \path (a1) edge[line width=1] (a4);
    \path (a1) edge[line width=1] (a5);
    \path (a2) edge[line width=1] (a3);
    \path (a2) edge[line width=1] (a4);
    \path (a2) edge[line width=1] (a5);

    % Purple Box
    \node (pnw) at (-0.5,2.5) {};
    \node (pne) at (1.5,2.5) {};
    \node (pse) at (1.5,-0.5) {};
    \node (psw) at (-0.5,-0.5) {};
    \path (pnw.center) edge[line width=1,dotted,color=violet] (pne.center);
    \path (pne.center) edge[line width=1,dotted,color=violet] (pse.center);
    \path (pse.center) edge[line width=1,dotted,color=violet] (psw.center);
    \path (psw.center) edge[line width=1,dotted,color=violet] (pnw.center);

    \node (arrow) at (2.25,1) {\huge$\Rightarrow$};

    % Possibility 1
    \node (b1) [rectangle,draw,inner sep=6pt,line width=1] at (3.25,1) {};
    \node (b1s) [rectangle,draw,inner sep=1.6pt,line width=1] at (3.25,1) {0};
    \node (b2) [rectangle,draw,inner sep=6pt,line width=1] at (3.25,0) {};
    \node (b2s) [rectangle,draw,inner sep=1.6pt,line width=1] at (3.25,0) {0};
    
    \node (b3) [circle,draw,inner sep=5pt,line width=1] at (4.25,2) {};
    \node (b3s) [rectangle,draw,inner sep=1.6pt,line width=1] at (4.25,2) {0};
    \node (b3box) [rectangle,draw,inner sep=9pt,line width=1,color=red] at (4.25,2) {};
    \node (b4) [circle,draw,inner sep=5pt,line width=1] at (4.25,1) {};
    \node (b4s) [rectangle,draw,inner sep=1.6pt,line width=1] at (4.25,1) {0};
    \node (b5) [circle,draw,inner sep=5pt,line width=1] at (4.25,0) {};
    \node (b5s) [rectangle,draw,inner sep=1.6pt,line width=1] at (4.25,0) {0};
    
    \path (b1) edge[line width=1] (b3);
    \path (b1) edge[line width=1] (b4);
    \path (b1) edge[line width=1] (b5);
    
    \path (b2) edge[line width=1] (b3);
    \path (b2) edge[line width=1] (b4);
    \path (b2) edge[line width=1] (b5);

    \node (or) at (5,1) {or};

    % Possibility 2
    \node (c1) [rectangle,draw,inner sep=6pt,line width=1] at (5.75,1) {};
    \node (c1s) [rectangle,draw,inner sep=1.6pt,line width=1] at (5.75,1) {1};
    \node (c2) [rectangle,draw,inner sep=6pt,line width=1] at (5.75,0) {};
    \node (c2s) [rectangle,draw,inner sep=1.6pt,line width=1] at (5.75,0) {1};
    
    \node (c3) [circle,draw,inner sep=5pt,line width=1] at (6.75,2) {};
    \node (c3s) [rectangle,draw,inner sep=1.6pt,line width=1] at (6.75,2) {0};
    \node (b3box) [rectangle,draw,inner sep=9pt,line width=1,color=red] at (6.75,2) {};
    \node (c4) [circle,draw,inner sep=5pt,line width=1] at (6.75,1) {};
    \node (c4s) [rectangle,draw,inner sep=1.6pt,line width=1] at (6.75,1) {0};
    \node (c5) [circle,draw,inner sep=5pt,line width=1] at (6.75,0) {};
    \node (c5s) [rectangle,draw,inner sep=1.6pt,line width=1] at (6.75,0) {0};
    
    \path (c1) edge[line width=1] (c3);
    \path (c1) edge[line width=1] (c4);
    \path (c1) edge[line width=1] (c5);
    
    \path (c2) edge[line width=1] (c3);
    \path (c2) edge[line width=1] (c4);
    \path (c2) edge[line width=1] (c5);

    \node (space2) at (0,-0.5) {};
\end{tikzpicture}
}\\

(b) \textit{Frozen cluster} and possible solutions

{
\begin{tikzpicture}[baseline=(current bounding box.center)]
    \node (space1) at (0,2.5) {};
    
    % Base Graph
    \node (a1) [rectangle,draw,inner sep=6pt,line width=1] at (0,1) {};
    \node (a1s) [rectangle,draw,inner sep=4.5pt,line width=1] at (0,1) {};
    \node (a2) [rectangle,draw,inner sep=6pt,line width=1] at (0,0) {};
    \node (a2s) [rectangle,draw,inner sep=4.5pt,line width=1] at (0,0) {};
    
    \node (a3) [circle,draw,inner sep=5pt,line width=1] at (1,2) {};
    \node (a3s) [rectangle,draw,inner sep=4.5pt,line width=1] at (1,2) {};
    \node (a3box) [rectangle,draw,inner sep=9pt,line width=1,color=red] at (1,2) {};
    \node (a4) [circle,draw,inner sep=5pt,line width=1] at (1,1) {};
    \node (a4s) [rectangle,draw,inner sep=4.5pt,line width=1] at (1,1) {};
    \node (a5) [circle,draw,inner sep=5pt,line width=1] at (1,0) {};
    \node (a5s) [rectangle,draw,inner sep=4.5pt,line width=1] at (1,0) {};
    
    \path (a1) edge[line width=1] (a3);
    \path (a1) edge[line width=1] (a4);
    \path (a1) edge[line width=1] (a5);
    \path (a2) edge[line width=1] (a4);
    \path (a2) edge[line width=1] (a5);

    % Purple Box
    \node (pnw) at (-0.5,2.5) {};
    \node (pne) at (1.5,2.5) {};
    \node (pse) at (1.5,-0.5) {};
    \node (psw) at (-0.5,-0.5) {};
    \path (pnw.center) edge[line width=1,dotted,color=violet] (pne.center);
    \path (pne.center) edge[line width=1,dotted,color=violet] (pse.center);
    \path (pse.center) edge[line width=1,dotted,color=violet] (psw.center);
    \path (psw.center) edge[line width=1,dotted,color=violet] (pnw.center);

    \node (arrow) at (2.25,1) {\huge$\Rightarrow$};

    % Possibility 1
    \node (b1) [rectangle,draw,inner sep=6pt,line width=1] at (3.25,1) {};
    \node (b1s) [rectangle,draw,inner sep=1.6pt,line width=1] at (3.25,1) {0};
    \node (b2) [rectangle,draw,inner sep=6pt,line width=1] at (3.25,0) {};
    \node (b2s) [rectangle,draw,inner sep=1.6pt,line width=1] at (3.25,0) {0};
    
    \node (b3) [circle,draw,inner sep=5pt,line width=1] at (4.25,2) {};
    \node (b3s) [rectangle,draw,inner sep=1.6pt,line width=1] at (4.25,2) {0};
    \node (b3box) [rectangle,draw,inner sep=9pt,line width=1,color=red] at (4.25,2) {};
    \node (b4) [circle,draw,inner sep=5pt,line width=1] at (4.25,1) {};
    \node (b4s) [rectangle,draw,inner sep=1.6pt,line width=1] at (4.25,1) {0};
    \node (b5) [circle,draw,inner sep=5pt,line width=1] at (4.25,0) {};
    \node (b5s) [rectangle,draw,inner sep=1.6pt,line width=1] at (4.25,0) {0};
    
    \path (b1) edge[line width=1] (b3);
    \path (b1) edge[line width=1] (b4);
    \path (b1) edge[line width=1] (b5);
    
    \path (b2) edge[line width=1] (b4);
    \path (b2) edge[line width=1] (b5);

    \node (or) at (5,1) {or};

    % Possibility 2
    \node (c1) [rectangle,draw,inner sep=6pt,line width=1] at (5.75,1) {};
    \node (c1s) [rectangle,draw,inner sep=1.6pt,line width=1] at (5.75,1) {1};
    \node (c2) [rectangle,draw,inner sep=6pt,line width=1] at (5.75,0) {};
    \node (c2s) [rectangle,draw,inner sep=1.6pt,line width=1] at (5.75,0) {1};
    
    \node (c3) [circle,draw,inner sep=5pt,line width=1] at (6.75,2) {};
    \node (c3s) [rectangle,draw,inner sep=1.6pt,line width=1] at (6.75,2) {1};
    \node (b3box) [rectangle,draw,inner sep=9pt,line width=1,color=red] at (6.75,2) {};
    \node (c4) [circle,draw,inner sep=5pt,line width=1] at (6.75,1) {};
    \node (c4s) [rectangle,draw,inner sep=1.6pt,line width=1] at (6.75,1) {0};
    \node (c5) [circle,draw,inner sep=5pt,line width=1] at (6.75,0) {};
    \node (c5s) [rectangle,draw,inner sep=1.6pt,line width=1] at (6.75,0) {0};
    
    \path (c1) edge[line width=1] (c3);
    \path (c1) edge[line width=1] (c4);
    \path (c1) edge[line width=1] (c5);

    \path (c2) edge[line width=1] (c4);
    \path (c2) edge[line width=1] (c5);

    \node (space2) at (0,-0.5) {};
\end{tikzpicture}
}\\

(c) \textit{Free cluster} and possible solutions

\caption{{\bf (a)} Example showing how the VH graph is computed from the erasure subgraph (subgraph edges are excluded for simplicity). Nodes in the VH graph correspond to clusters of erased qubits (erased qubit-nodes in the same row or column sharing a check-node, indicated by a blue box above). Edges in the VH graph correspond to \textit{connecting checks} (check nodes adjacent to both a vertical and horizontal cluster, indicated by a red box above). \textit{Isolated clusters} have no connecting checks (degree 0 nodes in the VH graph). \textit{Dangling clusters} have exactly one connecting check (degree 1 nodes in the VH graph). A dangling cluster is determined to be \textit{frozen} or \textit{free} based on the connectivity of the subgraph induced by the cluster; (b) and (c) show two examples for the vertical dangling cluster in the purple box with different subgraphs. {\bf (b)} An example of a frozen cluster. All solutions for the cluster which satisfy the internal checks have the same contribution to the connecting check. {\bf (c)} An example of a free cluster. There exist solutions to the internal checks of the cluster which are 0 or 1 on the connecting check; equivalently, there exists an error on the cluster whose syndrome is non-zero only on the single connecting check.
}
\label{fig:VH_graph_example}
\end{figure*}

\begin{algo}{VH decoder}
\Input{An erasure vector $\varepsilon \in \Z_2^N$, a syndrome $s \in \Z_2^{R_Z}$.}
\Output{Either {\bf Failure} or an $X$-type error $\hat E \in \{I, X\}^N$ such that $\sigma(\hat E) = s$ and $\supp(\hat E) \subseteq \supp(\varepsilon)$.}
\label{algo:vh_decoder}

\BlankLine
	Set $\hat E = I$.\;
	Construct an empty stack $L = []$.\;
	\While{there exists an isolated or a dangling cluster $\kappa$}
	{
		\If{$\kappa$ is isolated or frozen}
		{
			Compute an error $\hat E_{\kappa}$ supported on $\kappa$ whose syndrome matches $s$ on the internal checks of $\kappa$ in $T(\HH_Z)$ (using the Gaussian decoder).\;
			Replace $\hat E$ by $\hat E \hat E_{\kappa}$ and $s$ by $s + \sigma(\hat E_{\kappa})$.\;
			{\bf For} all qubits $j$ in $\kappa$, set $\varepsilon_j = 0$.\;
		}
		\Else
		{
			Then $\kappa$ is free.\;
			Remove the free connecting check $c$ of $\kappa$ from the Tanner graph $T(\HH_Z)$.\;
			Add the pair $(\kappa, c)$ to the stack $L$.\;
			{\bf For} all qubits $j$ in $\kappa$, set $\varepsilon_j = 0$.\;
		}
	}
	\While{the stack $L$ is non-empty}
	{
		Pop a cluster $(\kappa, c)$ from the stack $L$.\;
		Add the check node $c$ to the Tanner graph $T(\HH_Z)$.\;
		Compute an error $\hat E_{\kappa}$ supported on $\kappa$ whose syndrome matches $s$ on all the checks of $\kappa$ in $T(\HH_Z)$, including the free check $c$ (using the Gaussian decoder).\;
		Replace $\hat E$ by $\hat E \hat E_{\kappa}$ and $s$ by $s + \sigma(\hat E_{\kappa})$.\;
	}
	{\bf if} $\varepsilon \neq 0$ {\bf return Failure}, {\bf else return $\hat E$.}
\end{algo}

A check node of $T(\HH_Z)$ that belongs to a single cluster is called an {\em internal check}, otherwise it is called a {\em connecting check}. From Proposition~\ref{prop:VH_graph_structure}, a connecting check must belong to one horizontal and one vertical cluster.

A cluster is said to be {\em isolated} if is has no connecting check.
Then, it can be corrected independently of the other clusters.
A {\em dangling cluster} is defined to be a cluster with a single connecting check.

Given a cluster $\kappa$, let $E(\kappa)$ be the set of errors supported on the qubits of $\kappa$ whose syndrome is trivial over the internal checks of $\kappa$.
Let $S(\kappa)$ be the set of syndromes of errors $E \in E(\kappa)$ restricted to the connecting checks of $\kappa$.
The set $E(\kappa)$ is a subset of $\{I, X\}^N$ and $S(\kappa)$ is a subset of $\{0, 1\}^{d(\kappa)}$ where $d(\kappa)$ is the number of connecting checks of the cluster $\kappa$.

A cluster $\kappa$ can have two types of connecting check. If $S(\kappa)$ contains a weight-one vector supported on a connecting check $c$, we say that $c$ is a {\em free check}. Otherwise, it is a {\em frozen check}.
If a check is free, the value of the syndrome on this check can be adjusted at the end of the procedure to match $s$ using an error included in the cluster $\kappa$.

To compute a correction $\hat E$ for a syndrome $s \in \Z_2^{R_Z}$, we proceed as follows.
Denote by $s_{\kappa}$ the restriction of $s$ to a cluster $\kappa$.
We initialize $\hat E = I$ and we consider three cases.

{\bf Case 1: Isolated cluster.}
If $\kappa$ is a isolated cluster, we use Gaussian elimination to find an error $\hat E_{\kappa}$ supported on the qubits of $\kappa$ whose syndrome matches $s$ on the internal checks of $\kappa$.
Then, we add $\hat E_{\kappa}$ to $\hat E$, we add $\sigma(\hat E_{\kappa})$ to $s$ and we remove $\kappa$ from the erasure $\varepsilon$.
This cluster can be corrected independently of the other cluster because it is not connected to any other cluster.

{\bf Case 2: Frozen dangling cluster.}
If $\kappa$ is a dangling cluster and its only connecting check is frozen, we proceed exactly as in the case of an isolated cluster. This is possible because any correction has the same contribution to the syndrome on the connecting check.

{\bf Case 3: Free dangling cluster.}
The correction of a dangling cluster $\kappa$ that contains a free check is delayed until the end of the procedure. We remove $\kappa$ from the erasure and we remove its free check from the Tanner graph $T(\HH_Z)$. Then, we look for a correction $\hat E'$ in the remaining erasure. We add $\hat E'$ to $\hat E$ and $\sigma(\hat E')$ to $s$. Once the remaining erasure is corrected and the syndrome is updated, we find a correction $\hat E_{\kappa}$ inside $\kappa$ that satisfies the remaining syndrome $s_{\kappa}$ in $\kappa$.
We proceed in that order because the value of the syndrome on a free check can be adjusted at the end of the procedure to match $s$ using an error included in the cluster $\kappa$ (by definition of free checks).

Altogether, we obtain the VH decoder (Algorithm~\ref{algo:vh_decoder}). Fig.~\ref{fig:VH_graph_example} shows a small example illustrating the core concepts introduced to explain the algorithm. Our implementation is available here~\cite{NC_github}. It works by correcting all isolated and dangling clusters until the erasure is fully corrected. Otherwise, it returns {\bf Failure}.

% \clearpage

For a $r \times n$ matrix $H$, the complexity of the VH decoder is dominated by the cost of the Gaussian decoder which grows as $O(n^3)$ per cluster and there are at most $O(n)$ clusters with size linear in $n$.
Hence the overall cost is in $O(n^4)$ (assuming $r = O(n)$).
Therefore the VH decoder can be implemented in $O(N^2)$ bit operations where $N = \Theta(n^2)$ is the length of the quantum HGP code.
Using a probabilistic implementation of the Gaussian decoder~\cite{wiedemann1986solving,kaltofen1991wiedemann,lamacchia1990solving,kaltofen1995analysis}, we can implement the Gaussian decoder in $O(n^2)$ operations per cluster, reducing the complexity of the VH decoder to $O(n^3) = O(N^{1.5})$.

Algorithm~\ref{algo:vh_decoder} fails if the VH-graph of the erasure contains a cycle. However, one can modify the algorithm to eliminate some cycles by removing free checks of all clusters and not only dangling clusters. This may improve further the performance of the VH-decoder.

This modified version of the decoder would be applied after Algorithm 3 becomes stuck in a VH-decoder stopping set. Since the remaining clusters form a cycle, each of these has two or more connecting checks, and each connecting check can be identified as free or frozen with respect to a given cluster. Identifying and removing a free connecting check from the Tanner graph can possibly break the cycle, allowing Algorithm 3 to continue; as before, a correction matching the removed connecting check on the given cluster can be determined at the end of the algorithm. The additional cost of this modification comes from the need to classify multiple connecting checks per cluster, rather than a single check in the dangling case.

The identification of a connecting check as free or frozen itself requires an application of $O(n^3)$ complexity Gaussian elimination on a cluster of size $O(n)$, but the corresponding solutions can be saved to the stack until later used to find a partial correction supported on the cluster. This process could be applied as many as $r=O(n)$ times for a single cluster in the worst case scenario where all checks are connecting checks. Since the total number of clusters is at most $O(n)$, the extreme case where all clusters are contained in a cycle implies an upper bound on the complexity of $O(n^5)=O(N^{2.5})$, slower than the VH-decoder but still exceeding the Gaussian decoder. However, even the addition of this rule cannot account for all stopping sets. If the modified VH graph obtained after identifying and removing all free checks is still a cycle, then the decoder fails.

In comparison with our numerical results from Fig. ~\ref{fig:plots}, we see that the combination of pruned peeling and VH decoders performs almost as well as the ML decoder at low erasure erasure rates. This is to say that cycles of clusters, which are stopping sets for the VH decoder, are relatively infrequent in the low erasure rate regime. This behavior matches our intuition since errors for LDPC codes tend to be composed of disjoint small weight clusters~\cite{kovalev2013fault}.

\section{Conclusion}

We proposed a practical high-performance decoder for the correction of erasure with HGP codes.
Our numerical simulations show that the combination of the pruned peeling decoder with the VH decoder achieves a close-to-optimal performance. Moreover it can be implemented in complexity $O(N^2)$.
This decoder can be used as a subroutine of the BP-OSD decoder~\cite{panteleev2021degenerate}, the Union-Find decoder for surface or LDPC codes~\cite{huang2020fault,delfosse2021almost,delfosse2022toward}, or the Viderman's decoder~\cite{krishna2024viderman} to speed up these algorithms.

Combination with the Union-Find decoder also suggests a natural method by which our techniques could be generalized to a mixed error channel, with both erasures and bit/phase flips. Erased qubits can still be replaced with uniformly random mixed states, but we relax the initial assumption that non-erased bits do not have errors. After making a syndrome measurement, we may apply the Union-Find algorithm to grow clusters of qubits around the unsatisfied checks until clusters are large enough to support a correction. Finally, we treat these clusters as an erasure pattern and assume that qubits not contained in this erasure do not have errors. In this way, the mixed error problem can be converted into a simpler erasure error problem, allowing the application of the Pruned Peeling + VH decoder in the case of HGP codes. The Union-Find decoder for surface codes~\cite{huang2020fault,delfosse2021almost} already has almost linear complexity, but the combination with Pruned Peeling + VH could offer an improvement for the Union-Find algorithm generalized to quantum LDPC codes~\cite{delfosse2022toward}.

In future work, it would be interesting to adapt our decoder to other quantum LDPC codes~\cite{breuckmann2021balanced, panteleev2022asymptotically, leverrier2022quantum, dinur2023good}.
We are also wondering if one can reduce the complexity further to obtain a linear time ML decoder for the correction of erasure.
Finally, it would be interesting to investigate the resource overhead of various quantum computing architectures capable of detecting erasures~\cite{wu2022erasure,kang2023quantum,kubica2023erasure,tsunoda2023error}.

\begin{acknowledgements}
This research was supported by the MSR-Inria Joint Centre.
AL acknowledges support from the Plan France 2030 through the project ANR-22-PETQ-0006.
\end{acknowledgements}

\bibliographystyle{quantum}
\bibliography{references.bib}

\end{document}